\newtheorem{thm}{Theorem}[section]
\newtheorem{asmp}{Assumption}
\newtheorem{lemma}[thm]{Lemma}
\newtheorem{prop}[thm]{Proposition}
\newtheorem{theorem}[thm]{Theorem}
\theoremstyle{definition}
\newtheorem{example}{Example}[section]
\newtheorem{rem}[thm]{Remark}
\numberwithin{equation}{section}
\newcommand{\Var}{\mathrm{Var}}
\newtheorem*{acknowledgement}{Acknowledgement}
\begin{document}

	\title[Black-Scholes]{A derivation of the Black-Scholes option pricing model using a central limit theorem argument}

	\author[Majumdar]{Rajeshwari Majumdar{$^{\S\dag}$}}
	\thanks{\footnotemark {$\S$} Research was supported in part by NSF Grant DMS-1262929.}
	\address{$^{\dag}$ Department of Politics\\
		New York University\\
                    New York, NY 10012, U.S.A}
	\email{majumdar@nyu.edu}
	
	\author[Mariano]{Phanuel Mariano{$^{\S\ddag}$}}
	\address{$^{\ddag}$ Department of Mathematics\\
		Purdue University\\
		West Lafayette, IN 47907,  U.S.A.}
	\email{pmariano@purdue.edu}
	
	\author[Peng]{Lowen Peng{$^{\S\star}$}}
	\address{$^{\star}$ Department of Mathematics\\
		University of Connecticut\\
		Storrs, CT 06269,  U.S.A.}
	\email{lowen.peng@uconn.edu}
	
	\author[Sisti]{Anthony Sisti{$^{\S\star\star}$}}
	\address{$^{\star\star}$ Department of Mathematics\\
		University of Connecticut\\
		Storrs, CT 06269,  U.S.A.}
	\email{anthony.sisti@uconn.edu}

	\keywords{Black-Scholes; mathematical finance; options pricing; Central Limit Theorem}
	
	\subjclass{Primary 91B28; Secondary 91G20 60G99 60F05 }


	\begin{abstract} 
The Black-Scholes model (sometimes known as the Black-Scholes-Merton model) gives a theoretical estimate for the price of European options. The price evolution under this model is described by the Black-Scholes formula, one of the most well-known formulas in mathematical finance. For their discovery, Merton and Scholes  have been awarded the 1997 Nobel prize in Economics. The standard method of deriving the Black-Scholes European call option pricing formula involves stochastic differential equations. This approach is out of reach for most students learning the model for the first time. We provide an alternate derivation using the Lindeberg-Feller central limit theorem under suitable assumptions. Our approach is elementary and can be understood by undergraduates taking a standard undergraduate course in probability.

	\end{abstract}

	\maketitle

	\tableofcontents

	\renewcommand{\contentsname}{Table of Contents}

	\section{Introduction}\label{s.1}

\indent The Black-Scholes model was proposed by Fischer Black and Myron Scholes in their 1973 paper entitled ``The Pricing of Options and Corporate Liabilities''. They derived a formula for the value of a ``European-style" option in terms of the price of the stock \cite{bs1973} by utilizing techniques from stochastic calculus and partial differential equations. Later in 1973, Robert C. Merton expanded the mathematical ideas underlying the Black-Scholes model in his paper entitled ``Theory of Rational Option Pricing'' \cite{Merton1973}. Since its introduction, the formula has been widely used by option traders to approximate prices and has lead to a variety of new models for pricing derivatives. 

A modern derivation of the Black-Scholes model can be found in\cite{Ross2011}. The derivation given by Ross uses over 100 pages to arrive to the Black-Scholes and requires a discussion on geometric Brownian motion. In this paper, we consider an alternative approach to the derivation of the Black-Scholes European call option pricing formula using the central limit theorem. Our approach will be concise, elementary and can be understood by anyone taking a standard undergraduate course in probability.

The central limit theorem has played a crucial role in the development of modern probability theory, with Laplace, Poisson, Cauchy, Lindeberg, and L\'evy among the mathematicians who have contributed to its development in the nineteenth and twentieth centuries. The basic form of the central limit theorem, as given in \cite{PetrovBook1995}, is as follows: the sum of a sufficiently large number of independent and identically distributed random variables with finite mean and variance approximates a normal random variable in distribution. Mathematically, let $X_1, X_2, \dots,$ be a sequence of random variables and $S_n = \sum_{k=1}^n X_k$. Then, under a variety of different conditions, the distribution function of the appropriately centered and normalized sum $S_n$ converges to the standard normal distribution function as $n \rightarrow \infty$. In Section 3, we use the Lindeberg-Feller variant of the central limit theorem (as stated in \cite{Varadhan}) to establish the log-normality of the asset price under suitable assumptions:

\begin{theorem}[\textbf{Lindeberg-Feller}]\label{lindFeller}
Suppose for each $n$ and  $i=1,\dots n$, $X_{ni}$
are independent and have mean $0$. Let $S_{n}=\sum_{i=1}^{n}X_{ni}$. 
Suppose that $\sum_{i=1}^{n}\mathbb{E}[X_{ni}^{2}]\to\sigma^{2}$ for $0<\sigma^{2}<\infty$. 				
Then, the following two conditions are equivalent: 
\begin{enumerate}
\item[(a)]  $S_{n}$ converges weakly to a normal random variable with mean $0$ and variance $\sigma^{2}$, and the triangular array $\left\{ X_{ni}\right\}$ 
satisfies the condition that 
\begin{equation*}\label{uanStatement}
\lim_{n\rightarrow0}\max_{i}\mathbb{E}\left(X_{ni}^{2}\right)=0.
\end{equation*}
\item[(b)] (Lindeberg Condition) For all $\epsilon > 0$, $$\sum_{i=1}^{n}\mathbb{E}\left[X_{ni}^{2};\left|X_{ni}\right|>\epsilon\right]\to0.$$
\end{enumerate}

\end{theorem}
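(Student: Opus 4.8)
The plan is to work entirely with characteristic functions and to reduce both implications to the single pointwise statement $\prod_{i=1}^{n}\phi_{ni}(t)\to e^{-\sigma^{2}t^{2}/2}$, where $\phi_{ni}(t)=\mathbb{E}[e^{itX_{ni}}]$. Since the rows are independent we have $\mathbb{E}[e^{itS_{n}}]=\prod_{i=1}^{n}\phi_{ni}(t)$, so by L\'evy's continuity theorem this convergence is equivalent to $S_{n}$ converging weakly to a mean-zero, variance-$\sigma^{2}$ normal. Two cheap preliminaries feed both directions. First, the Lindeberg condition forces the negligibility condition: splitting $\mathbb{E}[X_{ni}^{2}]\le\epsilon^{2}+\sum_{j}\mathbb{E}[X_{nj}^{2};|X_{nj}|>\epsilon]$, taking $\max_{i}$, then $\limsup_{n}$, then $\epsilon\to0$ gives $\max_{i}\mathbb{E}[X_{ni}^{2}]\to0$. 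Second, under negligibility the mean-zero assumption yields $|1-\phi_{ni}(t)|=|\mathbb{E}[1-e^{itX_{ni}}+itX_{ni}]|\le\frac{t^{2}}{2}\mathbb{E}[X_{ni}^{2}]$, so that $\max_{i}|1-\phi_{ni}(t)|\to0$ for each fixed $t$.

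For (b)$\Rightarrow$(a) I would run the classical Lindeberg argument. Writing $\sigma_{ni}^{2}=\mathbb{E}[X_{ni}^{2}]$ and using the mean-zero property to kill the linear term, the Taylor bound $|e^{iy}-1-iy+\tfrac{y^{2}}{2}|\le\min(|y|^{3}/6,\,y^{2})$ gives, after splitting on $\{|X_{ni}|\le\epsilon\}$ and $\{|X_{ni}|>\epsilon\}$,
\[
\sum_{i=1}^{n}\Bigl|\phi_{ni}(t)-\bigl(1-\tfrac{t^{2}\sigma_{ni}^{2}}{2}\bigr)\Bigr|\le\frac{|t|^{3}\epsilon}{6}\sum_{i=1}^{n}\sigma_{ni}^{2}+t^{2}\sum_{i=1}^{n}\mathbb{E}[X_{ni}^{2};|X_{ni}|>\epsilon].
\]
Taking $\limsup_{n}$ (the second sum vanishes by Lindeberg) and then $\epsilon\to0$ drives the left side to $0$. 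The telescoping inequality $|\prod a_{i}-\prod b_{i}|\le\sum|a_{i}-b_{i}|$ for numbers in the closed unit disk then lets me replace $\prod\phi_{ni}(t)$ by $\prod(1-\tfrac{t^{2}\sigma_{ni}^{2}}{2})$, and a logarithm expansion whose quadratic error is controlled by $\max_{i}\sigma_{ni}^{2}\cdot\sum_{i}\sigma_{ni}^{2}\to0$ shows this latter product converges to $\exp(-\tfrac{t^{2}}{2}\sum_{i}\sigma_{ni}^{2})\to e^{-t^{2}\sigma^{2}/2}$, finishing this direction.

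For the converse (a)$\Rightarrow$(b), which I expect to be the main obstacle, negligibility is now a hypothesis, and the first task is to upgrade the weak convergence into the summation statement $\sum_{i=1}^{n}(1-\phi_{ni}(t))\to\frac{t^{2}\sigma^{2}}{2}$. Since $\max_{i}|1-\phi_{ni}(t)|\to0$, for large $n$ I can take principal logarithms and expand $\log\phi_{ni}(t)=-(1-\phi_{ni}(t))+O(|1-\phi_{ni}(t)|^{2})$; the error sum is at most $\max_{i}|1-\phi_{ni}(t)|\cdot\sum_{i}|1-\phi_{ni}(t)|$, and because $\sum_{i}|1-\phi_{ni}(t)|\le\frac{t^{2}}{2}\sum_{i}\sigma_{ni}^{2}$ stays bounded while the maximum tends to $0$, this error vanishes; comparing with $\log\prod_{i}\phi_{ni}(t)\to-\tfrac{t^{2}\sigma^{2}}{2}$ gives the claim. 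Taking real parts yields $\sum_{i=1}^{n}\mathbb{E}[1-\cos(tX_{ni})]\to\frac{t^{2}\sigma^{2}}{2}$.

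The finishing move is a truncation trick. Combining the last limit with $\frac{t^{2}}{2}\sum_{i}\sigma_{ni}^{2}\to\frac{t^{2}\sigma^{2}}{2}$ shows
\[
\sum_{i=1}^{n}\mathbb{E}\Bigl[\tfrac{t^{2}X_{ni}^{2}}{2}-\bigl(1-\cos(tX_{ni})\bigr)\Bigr]\to0,
\]
and the integrand is nonnegative since $1-\cos u\le u^{2}/2$. Fixing $\epsilon>0$ and choosing $t=2\sqrt{2}/\epsilon$, on the event $\{|X_{ni}|>\epsilon\}$ one has $t^{2}X_{ni}^{2}\ge8$, hence $\tfrac{t^{2}X_{ni}^{2}}{2}-(1-\cos tX_{ni})\ge\tfrac{t^{2}X_{ni}^{2}}{2}-2\ge\tfrac{t^{2}X_{ni}^{2}}{4}$; restricting the nonnegative sum to this event gives $\frac{t^{2}}{4}\sum_{i}\mathbb{E}[X_{ni}^{2};|X_{ni}|>\epsilon]\to0$, which is exactly the Lindeberg condition. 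The delicate point throughout is that each error estimate must be simultaneously uniform in $i$ and summable in $i$; the negligibility condition is precisely what reconciles these two demands, and bookkeeping it is where the genuine work lies. I would remark that the forward implication also admits a characteristic-function-free proof via Lindeberg's Gaussian replacement applied to test functions in $C^{3}$, but the converse seems to genuinely require the harmonic-analytic argument above.
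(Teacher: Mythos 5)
The paper does not prove this theorem: it is quoted as a known result from the cited reference (Varadhan's lecture notes) and used as a black box in Section 3, so there is no in-paper argument to compare against. Your proposal is the standard characteristic-function proof of the Lindeberg--Feller theorem and is essentially correct in both directions, including the Feller converse via $1-\cos u\le u^{2}/2$ and the choice $t=2\sqrt{2}/\epsilon$. The only point I would tighten is the step where you pass from $\prod_i\phi_{ni}(t)\to e^{-t^{2}\sigma^{2}/2}$ to the full complex limit $\sum_i(1-\phi_{ni}(t))\to\tfrac{t^{2}\sigma^{2}}{2}$: controlling the imaginary part $\sum_i\mathbb{E}[\sin(tX_{ni})]$ at this stage would itself seem to need Lindeberg-type information, and the branch of the logarithm is not automatically pinned down. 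This is harmless because you only ever use the real part, which follows cleanly from $\operatorname{Re}\sum_i\mathrm{Log}\,\phi_{ni}(t)=\log\bigl|\prod_i\phi_{ni}(t)\bigr|\to-\tfrac{t^{2}\sigma^{2}}{2}$; I would phrase that step in terms of $\log|\cdot|$ from the outset and claim only $\sum_i\mathbb{E}[1-\cos(tX_{ni})]\to\tfrac{t^{2}\sigma^{2}}{2}$.
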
 

Our work in this paper is inspired by Chapters 17 and 18 of \cite{Sengupta}. We give a rigorous mathematical treatment of the results discussed in that text using an elementary approach that is accessible to students who have taken an undergraduate probability course. The rest of this section introduces the basic financial concepts underlying the Black-Scholes model.

	\indent A \textbf{financial instrument} is any asset that can be traded on the market. Consider the following kind of instrument: If an event $B$ occurs, the holder of the instrument receives one dollar, and if $B$ does not occur, the holder receives nothing. The value of such an instrument is dependent on the probability that the event occurs. This probability is assessed through a \textbf{pricing measure}, denoted by $Q$. A pricing measure can be understood as a way to determine the amount of the underlying asset that one would be willing to pay in order to own a financial instrument. For example, if a financial instrument involves the exchange of one dollar given the event $B$ occurs, and the probability that event $B$ occurs is $Q(B)$, an individual would be willing to risk $Q(B)$ dollars to own the instrument. \\
	\indent A measuring unit for the price of a financial instrument is called a \textbf{numeraire}. In the previous example, the dollar would function as a numeraire and the pricing measure would be with respect to dollars. Numeraires have time stamps, so their value corresponds to a set date. Consider numeraires such as one unit of cash today, or one unit of cash at a future time $t$; the value of that unit of cash may differ from today to that time $t$. Thus, we specify that a pricing measure is with respect to the numeraire unit cash at time-$t$.\\
	\indent A call (respectively, put) \textbf{option} is a contract that gives the option holder the right to buy (respectively, sell) an asset for a certain price $K$, called the \textbf{strike price}, during the time period $[0,t]$ (for an American option) or at time $t$ (for a European option), where $t$ is the expiration time of that right (often referred to as just the \textbf{expiration time}). In what follows, we price a \textbf{European call option}, which entitles the holder to purchase a unit of the underlying asset at expiration $t$ for strike $K$. 
	
	The Black-Scholes formula for the price of a European call option is derived
under the assumption that there is no arbitrage opportunity surrounding a
trade of the option (or the underlying instrument), that is, one cannot expect
to generate a risk-free profit by purchasing (or selling) the option (or the
underlying instrument). We denote the time when the option is priced as
time $0$, when the underlying instrument is valued at $X_0$. Recall the option
expires at time $t$ and the strike price for the option is $K$. Suppose the
risk-free rate of interest is $r$. If the option is priced for $C$, then the
future-value of it at time $t$, under continuous compounding, is $Ce^{rt}$.
With $X_t$ denoting the price of the underlying instrument at time $t$, the
payoff of the option is $\max\left(X_t-K,0\right)$. The no arbitrage
opportunity on the option trade requires the equation
\begin{equation}\label{noarboption}
C=e^{-rt}\mathbb{E}\left(\max\left(X_t-K,0\right)\right),
\end{equation}
to hold; similarly, the no arbitrage opportunity on the trade of the
underlying instrument requires the equation
\begin{equation}\label{noarbsec}
\mathbb{E}\left(X_t\right)=X_0e^{rt},
\end{equation}
to hold.
The formula used to price the European call option under the \textbf{Black-Scholes European option pricing model} is given by 
	\begin{equation}\label{eqnPrice}
	C = X_0N(d_+) - Ke^{-rt}N(d_-),
	\end{equation}
	where $N$ is the standard Normal CDF, that is, 
\[
N(x)=\frac{1}{\sqrt{2\pi}}\int_{-\infty}^{x}e^{-y^{2}/2}dy,
\]
\[d_\pm=\frac1{\sigma \sqrt{t}}\log\left[e^{rt}X_0/K\right]\pm\frac12\sigma \sqrt{t},\]
and 
$\sigma$ is the volatility of the return on the underlying asset through expiration.

	\begin{example} Consider the pricing of a European call option on a stock with a present value of 50 Euros and a strike price of 52 Euros under the following conditions: $r$ = 4\% (per annum), $t$ = 1 (year), $\sigma$ = 0.15. To calculate the price of this option we use Equation \eqref{eqnPrice}. We first find 

		$$d_+ = \frac{\log\left[e^{0.04(1)}50/52\right]}{0.15} + \frac{1}{2}(0.15)= 0.0802  $$
		and
		
		$$d_- = \frac{\log\left[e^{0.04(1)}50/52\right]}{0.15} - \frac{1}{2}(0.15)= -0.0698; $$
		\ \\
		we then have
		\begin{align*}
		C
		& = 50N(0.0802) - 52e^{-(0.04)(1)}N(-0.0698) \\
		& =50(.532) - 52(0.96)(0.472) \\
		& = 3.04.
		\end{align*}
		Thus, from the Black-Scholes model, the price of this call option would be 3.04 Euros.
	\end{example}

In section 2, we derive the call option pricing formula assuming the log-normality of the underlying asset price. In section 3, we prove the log-normality under suitable assumptions.

	\section{Pricing the European Call Option}\label{s.2}
To derive the call option pricing formula in \eqref{eqnPrice} we first show the following fact regarding normal random variables.

\begin{lemma}\label{Sengupta1711}
For any normal random variable $Y$ with mean $\mu_Y$, standard deviation $\sigma_Y$, and $M>0$, we have
\begin{equation*}
\mathbb{E}\left(\max\left(e^Y-M,0\right)\right)=\mathbb{E}\left(e^Y%
\right)N\left(h_{+}\right)-MN\left(h_{-}\right)\text{,}
\end{equation*}
where 
\begin{equation*}
h_\pm=\left[\log\left(\mathbb{E}\left(e^Y\right)\Big/M\right)\pm\frac{1}{2}%
\sigma_Y^2\right]\Big/\sigma_Y.
\end{equation*}

\end{lemma}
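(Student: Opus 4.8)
The plan is to reduce the expectation to a Gaussian integral over the region where the payoff is positive, and then evaluate it by completing the square. First I would write $Y = \mu_Y + \sigma_Y Z$ with $Z$ a standard normal variable, so that
\[
\mathbb{E}\left(\max\left(e^Y - M, 0\right)\right) = \int_{-\infty}^{\infty} \max\left(e^{\mu_Y + \sigma_Y z} - M,\, 0\right) \frac{1}{\sqrt{2\pi}}\, e^{-z^2/2}\, dz.
\]
The integrand is nonzero exactly when $e^{\mu_Y + \sigma_Y z} > M$, that is, when $z > z_0 := (\log M - \mu_Y)/\sigma_Y$. Hence the integral reduces to one over $(z_0, \infty)$, which splits as
\[
\int_{z_0}^{\infty} e^{\mu_Y + \sigma_Y z}\, \frac{1}{\sqrt{2\pi}}\, e^{-z^2/2}\, dz \;-\; M \int_{z_0}^{\infty} \frac{1}{\sqrt{2\pi}}\, e^{-z^2/2}\, dz.
\]

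The second integral is immediately $M\,(1 - N(z_0)) = M\,N(-z_0)$ by the symmetry of the standard normal density. For the first integral, the key step is to complete the square in the exponent, writing $\mu_Y + \sigma_Y z - \tfrac12 z^2 = \left(\mu_Y + \tfrac12\sigma_Y^2\right) - \tfrac12(z - \sigma_Y)^2$. Factoring out the constant $e^{\mu_Y + \sigma_Y^2/2}$ and substituting $u = z - \sigma_Y$ recasts the remaining integrand as a standard normal density integrated over $(z_0 - \sigma_Y, \infty)$, yielding $e^{\mu_Y + \sigma_Y^2/2}\, N(\sigma_Y - z_0)$. Recognizing that $e^{\mu_Y + \sigma_Y^2/2} = \mathbb{E}(e^Y)$, the standard lognormal expectation, produces the two terms $\mathbb{E}(e^Y)\, N(\sigma_Y - z_0)$ and $M\, N(-z_0)$.

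It then remains to verify the bookkeeping identities $\sigma_Y - z_0 = h_+$ and $-z_0 = h_-$. Substituting $\log \mathbb{E}(e^Y) = \mu_Y + \tfrac12\sigma_Y^2$ into the stated definition of $h_\pm$ gives $h_- = (\mu_Y - \log M)/\sigma_Y$ and $h_+ = (\mu_Y + \sigma_Y^2 - \log M)/\sigma_Y$, which match $-z_0$ and $\sigma_Y - z_0$ respectively, completing the identification. The only non-routine ingredient is the completing-the-square evaluation of the first integral; everything else is a direct computation that an undergraduate can follow. I expect the main point requiring care to be keeping the change of variables and the shift of integration limits consistent, so that the shifted lower limit $z_0 - \sigma_Y$ correctly produces $N(\sigma_Y - z_0)$ rather than its complement.
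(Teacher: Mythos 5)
Your proposal is correct and follows essentially the same route as the paper's proof: both split the expectation into an exponential-weighted Gaussian integral over the region $\{e^Y>M\}$ minus $M$ times a tail probability, evaluate the first piece by completing the square, and use $1-N(x)=N(-x)$. The only cosmetic difference is that you standardize to $Z=(Y-\mu_Y)/\sigma_Y$ first, whereas the paper integrates directly against the density of $Y$.
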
	

\begin{proof}
For a normal
random variable $Y$,
\begin{equation*}
\mathbb{E}\left(e^Y\right)=e^{\mu_Y+\frac{\sigma_Y^2}{2}}\text{.}
\end{equation*}
As such,
\begin{equation*}
\begin{aligned}[t]
&h_{+}=\frac{\mu_Y+\sigma_Y^2-\log M}{\sigma_Y}\\
&h_{-}=\frac{\mu_Y-\log M}{\sigma_Y}\vcenter{\hbox{.}}
\end{aligned}
\end{equation*}
Now note that
\begin{equation*}
\mathbb{E}\left(\max\left(e^Y-M,0\right)\right)=\int_{\log M}^\infty
e^y\phi_{\mu_Y,\sigma_Y}\left(y\right)dy-MP\left(Y>\log M\right)\text{,}
\end{equation*}
where $\phi_{\mu_Y,\sigma_Y}$ is the density of $Y$. Completing the square
\begin{equation*}
y-\frac{\left(y-\mu_Y\right)^2}{2\sigma_Y^2}=\mu_Y+\frac{\sigma_Y^2}{2}-%
\frac{\left(y-\left(\mu_Y+\sigma_Y^2\right)\right)^2}{2\sigma_Y^2}
\end{equation*}
we obtain, using the identity $1-N\left(x\right)=N\left(-x\right)$, 
\begin{equation*}
\int_{\log M}^\infty
e^y\phi_{\mu_Y,\sigma_Y}\left(y\right)dy=e^{\mu_Y+\frac{\sigma_Y^2}{2}}N%
\left(h_{+}\right).
\end{equation*}
Since
\begin{equation*}
P\left(Y>\log M\right)=N\left(h_{-}\right)\text{,}
\end{equation*}
the equality follows.
\end{proof}

	Recall that under the assumption of no arbitrage, the price of a European call option must equal the expected payoff of the option. Expectation is computed with respect to the pricing measure $Q_t$, corresponding to time-$t$ cash numeraire.

	\begin{prop}
	Assume there are no opportunities for arbitrage and the risk-free interest rate is $r$. Consider a European call option on an instrument with expiration $t$ and strike $K$. Let $X_t$ be the time-$t$ price of the underlying instrument, where $X_t=X_0e^{Y_t}$ and the $Q_t$-induced distribution of $Y_t$ is $ \mathcal N(\mu_{Y_t},\sigma_{Y_t}^2)$. Then, the discounted (that is, time-0) price of the call option,  $C$, is given by
	 \begin{equation}\label{eqnPrice2}
	C = X_0N(d_+) - Ke^{-rt}N(d_-),
	\end{equation}
	where
	\[d_\pm=\frac1{\sigma_{Y_t}}\log\left[e^{rt}X_0/K\right]\pm\frac12\sigma_{Y_t}.\]

	\end{prop}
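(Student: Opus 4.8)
The plan is to combine the no-arbitrage identity for the option with Lemma~\ref{Sengupta1711}, using the no-arbitrage identity for the underlying instrument to pin down the one expectation that appears. There is no deep obstacle here: the proposition is essentially a direct computation, and the only real work is a bookkeeping step that matches the constants $h_\pm$ produced by the lemma with the stated $d_\pm$.

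First I would invoke the no-arbitrage condition~\eqref{noarboption} on the option, which gives $C = e^{-rt}\mathbb{E}\left(\max\left(X_t-K,0\right)\right)$. Writing $X_t = X_0 e^{Y_t}$ and factoring the positive constant $X_0$ out of the maximum, this becomes
\[
C = e^{-rt} X_0\, \mathbb{E}\left(\max\left(e^{Y_t} - K/X_0,\, 0\right)\right).
\]
I would then apply Lemma~\ref{Sengupta1711} to the normal random variable $Y = Y_t$ with $M = K/X_0 > 0$, obtaining
\[
\mathbb{E}\left(\max\left(e^{Y_t} - K/X_0,\, 0\right)\right) = \mathbb{E}\left(e^{Y_t}\right) N(h_+) - \frac{K}{X_0} N(h_-),
\]
where $h_\pm = \left[\log\left(\mathbb{E}\left(e^{Y_t}\right) X_0 / K\right) \pm \tfrac12 \sigma_{Y_t}^2\right]/\sigma_{Y_t}$.

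The step that ties everything together is the no-arbitrage condition~\eqref{noarbsec} on the underlying instrument. Since $\mathbb{E}(X_t) = X_0 e^{rt}$ and $X_t = X_0 e^{Y_t}$, we get $\mathbb{E}\left(e^{Y_t}\right) = e^{rt}$. Substituting this into $h_\pm$ collapses them exactly to
\[
h_\pm = \frac{1}{\sigma_{Y_t}}\log\left[e^{rt} X_0 / K\right] \pm \tfrac12 \sigma_{Y_t} = d_\pm;
\]
this is the main bookkeeping step. Finally, plugging $\mathbb{E}\left(e^{Y_t}\right) = e^{rt}$ and $M = K/X_0$ back into the expression for $C$ and distributing the factor $e^{-rt} X_0$ gives
\[
C = e^{-rt} X_0 \left[e^{rt} N(d_+) - \frac{K}{X_0} N(d_-)\right] = X_0 N(d_+) - K e^{-rt} N(d_-),
\]
which is the desired formula~\eqref{eqnPrice2}. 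The only place that requires genuine care is factoring $X_0$ out of the maximum so that Lemma~\ref{Sengupta1711} applies with the correct $M$; every remaining step is substitution and algebraic simplification.
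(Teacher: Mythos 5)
Your proof is correct and follows essentially the same route as the paper's: factor $X_0$ out of the payoff, apply Lemma \ref{Sengupta1711} with $M = K/X_0$, use Equation \eqref{noarbsec} to identify $\mathbb{E}\left(e^{Y_t}\right) = e^{rt}$ so that $h_\pm = d_\pm$, and conclude via Equation \eqref{noarboption}. No differences worth noting.
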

	
\begin{proof}
From the definition of $X_t$,
\begin{equation*}
\max\left(X_t-K,0\right)=X_0\max\left(e^{Y_t}-\frac{K}{X_0}\vcenter{\hbox{$,%
$}} \ 0\right).
\end{equation*}
By Lemma \ref{Sengupta1711},
\begin{equation*}
\mathbb{E}\left(\max\left(e^{Y_t}-\frac{K}{X_0}\vcenter{\hbox{$,$}}\ 0\right)%
\right)=\mathbb{E}\left(e^{Y_t}\right)N\left(h_{+}\right)-\frac{K}{X_0}N%
\left(h_{-}\right)
\end{equation*}
with
\begin{equation*}
h_\pm=\left[\log\left(\mathbb{E}\left(e^{Y_t}\right)\frac{X_0}{K}\right)\pm%
\frac{1}{2}\sigma_{Y_t}^2\right]\Big/\sigma_{Y_t}=d_\pm\text{,}
\end{equation*}
where the second equality follows from Equation \eqref{noarbsec}. The
proof follows from Equation \eqref{noarboption}.
\end{proof}

	\section{Log-Normality of Prices}\label{s.3}
	
	In the previous section, we derived the Black-Scholes formula on the premise that our prices follow a log-normal distribution. In this section, we use the Lindeberg-Feller central limit theorem to prove this premise under the following assumptions.  Expectation is computed with respect to the pricing measure $Q_0$, corresponding to time-$0$ cash numeraire.
	
	\begin{asmp}\label{A1}
		For each $t$, the random variable $Y_t= \log \frac{X_t}{X_0}$ has finite variance.
	\end{asmp}
	
	\begin{asmp}\label{A2}
		The process $Y_t$ has stationary and independent increments. That is, the differences $Y_t-Y_s$ are independent for disjoint intervals $[s,t]$; for intervals of equal length, they are i.i.d.
	\end{asmp}

	\begin{asmp}\label{A4}
		For every $\epsilon > 0$, 
		$n\mathbb{E}\left[\left(Y_{t/n}-Y_{0}\right)^{2} ; \left|Y_{t/n}-Y_{0}\right|>\epsilon\right]\rightarrow0$ as $n\to\infty$. 
	\end{asmp}

	\begin{theorem}\label{lognormtheorem}
		Under Assumptions 1, 2, and 3, for every $t>0$, $Y_t$ is a normal random variable with respect to the pricing measure $Q_0$ with variance $\sigma^2 t$ for some constant $\sigma \geq 0$ and all $t\geq 0$.
	\end{theorem}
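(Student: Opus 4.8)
The plan is to apply the Lindeberg--Feller theorem (Theorem \ref{lindFeller}) to a triangular array built from the increments of $Y$ over a uniform partition of $[0,t]$. Fix $t>0$ and note that $Y_0=\log(X_0/X_0)=0$. For each $n$ and $i=1,\dots,n$ I would set
\[
W_{ni}=Y_{it/n}-Y_{(i-1)t/n}.
\]
By Assumption \ref{A2} the $W_{ni}$ are, for fixed $n$, independent and identically distributed, each with the law of $Y_{t/n}-Y_0=Y_{t/n}$, and they telescope: $\sum_{i=1}^n W_{ni}=Y_t$. Since the $W_{ni}$ need not have mean zero, I would center them. Writing $m(s)=\mathbb{E}[Y_s]$, the stationary-independent-increment structure gives $m(s+u)=m(s)+m(u)$, whence $m(t/n)=m(t)/n$. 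Define $X_{ni}=W_{ni}-m(t/n)$; these are i.i.d., mean zero, and $S_n:=\sum_{i=1}^n X_{ni}=Y_t-m(t)$ for every $n$.

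Next I would verify the two hypotheses of Theorem \ref{lindFeller}. For the variance condition, independence and stationarity give $\mathrm{Var}(Y_t)=\sum_{i=1}^n \mathrm{Var}(W_{ni})=n\,\mathrm{Var}(Y_{t/n})$, so the quantity $\sum_{i=1}^n \mathbb{E}[X_{ni}^2]$ equals the constant $\mathrm{Var}(Y_t)$ for every $n$; by Assumption \ref{A1} this is finite. The same additive relation shows that $v(s):=\mathrm{Var}(Y_s)$ satisfies $v(s+u)=v(s)+v(u)$ with $v\ge 0$, hence $v$ is nondecreasing and therefore linear, $v(t)=\sigma^2 t$ with $\sigma^2=v(1)\ge 0$, which identifies the claimed variance. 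If $\sigma=0$ then $Y_t$ is an a.s.\ constant, a degenerate normal with variance $\sigma^2 t=0$, and I would dispose of this case directly; otherwise $\sigma^2 t>0$ and the variance hypothesis of Theorem \ref{lindFeller} holds.

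For the Lindeberg condition I must pass from the uncentered form in Assumption \ref{A4} to the centered variables $X_{ni}$. Since $m(t/n)=m(t)/n\to 0$, for large $n$ the event $\{|X_{n1}|>\epsilon\}$ is contained in $\{|Y_{t/n}|>\epsilon/2\}$; bounding $X_{n1}^2\le 2Y_{t/n}^2+2m(t/n)^2$ and using $n\,m(t/n)^2=m(t)^2/n\to 0$ reduces $n\,\mathbb{E}[X_{n1}^2;|X_{n1}|>\epsilon]$ to $2n\,\mathbb{E}[Y_{t/n}^2;|Y_{t/n}|>\epsilon/2]$ up to a vanishing term, and the former tends to $0$ by Assumption \ref{A4}. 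With both hypotheses met, Theorem \ref{lindFeller} gives that $S_n$ converges weakly to $\mathcal{N}(0,\sigma^2 t)$. The decisive observation is that $S_n=Y_t-m(t)$ does \emph{not} depend on $n$: a sequence of random variables with a fixed distribution that converges weakly must already carry the limiting law, so $Y_t-m(t)\sim\mathcal{N}(0,\sigma^2 t)$, and hence $Y_t$ is normal with variance $\sigma^2 t$.

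I expect the main obstacle to be the bookkeeping around centering. Assumption \ref{A4} is phrased for the raw increments $Y_{t/n}-Y_0$, whereas Lindeberg--Feller as stated requires mean-zero summands, so the crux is verifying that recentering by $m(t/n)$ (which I must first show is asymptotically negligible, via $m(t/n)=m(t)/n$) preserves the Lindeberg condition. The complementary subtlety is the clean exploitation of the fact that $S_n$ is the same random variable for every $n$, which is precisely what upgrades the asymptotic conclusion of the central limit theorem into an exact distributional identity for $Y_t$.
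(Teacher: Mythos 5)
Your proposal is correct and follows essentially the same route as the paper: the same triangular array of increments over a uniform partition of $[0,t]$, the same additive-function argument yielding $\Var(Y_s)=\sigma^2 s$ (which the paper isolates as Lemma \ref{lemma3}), and the same application of Theorem \ref{lindFeller}. You are in fact more careful than the paper at three points it glosses over: the explicit centering by $m(t/n)=m(t)/n$ together with the check that Assumption \ref{A4} survives this recentering, the observation that $S_n=Y_t-m(t)$ does not depend on $n$ so the weak limit is the exact law of $Y_t-m(t)$, and the degenerate case $\sigma=0$.
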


In addition to the Lindeberg-Feller central limit theorem (Theorem \ref{lindFeller}), the proof of Theorem \ref{lognormtheorem} makes use of the following lemma.

	\begin{lemma}\label{lemma3}
		Suppose $f:\left[ 0,\infty \right) \to \left[ 0,\infty \right)$ satisfies $f\left(x+y\right)=f(x)+f(y)$. There exists a constant $C$ such that $f(x)=Cx$ for all $x\geq 0$. 
	\end{lemma}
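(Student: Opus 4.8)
The plan is to recognize this as the Cauchy functional equation $f(x+y)=f(x)+f(y)$ and to exploit the codomain restriction $f\bigl([0,\infty)\bigr)\subseteq[0,\infty)$ to rule out the pathological (non-measurable) solutions that the bare additivity relation otherwise admits. The crucial observation is that nonnegativity of $f$ forces monotonicity for free: for any $x\geq 0$ and $y\geq 0$, additivity gives $f(x+y)=f(x)+f(y)\geq f(x)$ because $f(y)\geq 0$. Hence $f$ is nondecreasing on $[0,\infty)$, and it is this monotonicity — rather than any assumed continuity — that will pin down the value of $f$ at every point once it is known on a dense set.

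First I would determine $f$ on the nonnegative rationals. Setting $x=y=0$ yields $f(0)=2f(0)$, so $f(0)=0$. A routine induction on the additivity relation gives $f(nx)=nf(x)$ for every positive integer $n$ and every $x\geq 0$. Writing $C:=f(1)$, this already gives $f(n)=Cn$; applying the same relation to $x=1/n$ gives $n\,f(1/n)=f(1)=C$, so $f(1/n)=C/n$, and then $f(m/n)=m\,f(1/n)=Cm/n$. Thus $f(q)=Cq$ for every nonnegative rational $q$. Note that $C=f(1)\geq 0$, consistent with the codomain constraint.

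Next I would upgrade from the rationals to all of $[0,\infty)$ by a squeeze argument using the monotonicity established above. Fix any real $x\geq 0$ and choose nonnegative rationals $q_1,q_2$ with $q_1\leq x\leq q_2$. Monotonicity gives $f(q_1)\leq f(x)\leq f(q_2)$, that is, $Cq_1\leq f(x)\leq Cq_2$. Letting $q_1\uparrow x$ and $q_2\downarrow x$ through sequences of rationals squeezes $f(x)$ between two quantities both tending to $Cx$, forcing $f(x)=Cx$.

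There is no genuinely difficult step here; the content of the lemma is entirely in the hypothesis on the codomain. The point worth emphasizing is that without some regularity assumption the additive equation has wild solutions, so the essential insight is that the restriction of both domain and range to $[0,\infty)$ secretly encodes monotonicity, which substitutes for continuity and makes the dense-set determination on the rationals propagate to every real.
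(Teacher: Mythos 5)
Your proof is correct and follows essentially the same route as the paper's: determine $f$ on the nonnegative rationals via additivity, observe that nonnegativity of the codomain forces $f$ to be nondecreasing, and then squeeze an arbitrary real between rationals. The only cosmetic difference is that you prove $f(nx)=nf(x)$ for all $x$ at once and then specialize to $x=1/n$, whereas the paper runs two separate inductions; the substance is identical.
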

	
	\begin{proof}
		\ We first observe that
		\begin{equation*}
		f(0)=f(0+0)=f(0)+f(0)=2f(0), 
		\end{equation*}
		implying that $f(0)=0$. We can prove by induction that 
		\begin{equation}\label{lp1}
		f(m)=mf(1)\text{ for all }m\geq 1.
		\end{equation}
		Let $p$ and $q$ ($>1$) be positive integers with no common factors. By induction again,
		\begin{equation}\label{lp2}
		f\bigg(m\frac{p}{q}\bigg) =mf\bigg(\frac{p}{q}\bigg) \text{ for all }m\geq 1.
		\end{equation}
		Using Equation (\ref{lp1}) followed by Equation (\ref{lp2}),
		\begin{equation*}
		pf(1) = f(p) = f\bigg(q\frac{p}{q}\bigg) = qf\bigg(\frac{p}{q}\bigg);
		\end{equation*}
		hence, for every positive rational number $r$ of the form $p/q$,
		\begin{equation}\label{lp3}
		f(r) = f\bigg(\frac{p}{q}\bigg) = \frac{p}{q}f(1) = rf(1).
		\end{equation}
		Thus, we have shown that for every rational number $x$ in
		$\big[0,\infty\big)$,
		\begin{equation}\label{lp4}
		f(x)=Cx \text{ where }C=f(1).
		\end{equation} 

Note that if $x\leq y$, then
		\begin{equation*}
		f\left(y\right)=f\left(x+y-x\right)=f\left(x\right)+f\left(y-x\right)\geq
		f\left(x\right)\text{,}
		\end{equation*}
		showing that $f$ is non-decreasing.

We claim that Equation $\left(\text{\ref{lp4}}\right)$ holds for a positive irrational number
$d$. Let $n_0$ be a positive integer such that for all $n\geq n_0$,
\begin{equation*}
\frac{1}{n}<d.
\end{equation*}
For every $n\geq n_0$, choose
$r_n\in\left(d-\frac{1}{n}\vcenter{\hbox{$,d$}}\right)$ and
$s_n\in\left(d,d+\frac{1}{n}\right)$ to be arbitrary rational numbers. Then, by
Equation $\left(\text{\ref{lp4}}\right)$ and the observed
monotonicity of $f$,
\begin{equation*}
r_nf\left(1\right)=f\left(r_n\right)\leq f\left(d\right)\leq
f\left(s_n\right)=s_nf\left(1\right)\text{.}
\end{equation*}
Since $r_n \rightarrow d$ and $s_n \rightarrow d$, by the squeeze theorem we have that $f(d)$ converges to $df(1)$, as needed.

\end{proof}
	
We are now ready to prove Theorem \ref{lognormtheorem}.	
	
\begin{proof}
	We first show that
	\begin{equation}\label{varform}
	\Var\left[Y_t\right] = \sigma^2 t.
	\end{equation}
	To that end, note that 
	\begin{equation}
	Y_{t+s} - Y_0 = Y_{t+s}-Y_t + Y_t - Y_0;
	\end{equation}
	by Assumption \ref{A2}, independent increments followed by stationary increments, 
	\begin{align}\label{varexpand}
	\Var\left[Y_{t+s} - Y_0\right] &= \Var\left[ Y_{t+s}-Y_t\right] + \Var\left[Y_t - Y_0\right]  \nonumber\\
	&= \Var\left[Y_s-Y_0\right]+ \Var\left[Y_t - Y_0\right].
	\end{align}

	With $f(u)=\Var\left[Y_u - Y_0\right]$, Equation (\ref{varexpand}) reduces to 
	\begin{equation}
	f(t+s) = f(t) + f(s).
	\end{equation}
	Since $f$ is non-negative, by Lemma \ref{lemma3}, $f(t) = tf(1)$, where $f(1) = \Var\left[Y_1 - Y_0\right] = \sigma^2,$ 
	thus establishing Equation (\ref{varform}).

	Now, to prove the assertion of the theorem, we show that $Y_{t}-Y_{0}$, where $Y_0$ is a deterministic quantity, is normally distributed with variance $\sigma^2 t$ using the Lindeberg-Feller Theorem.  
	
	With 
	\begin{equation} \label{xni}
	X_{ni}=Y_{ti/n}-Y_{t(i-1)/n},
	\end{equation}
we obtain, by telescopic cancellation, 
	\begin{equation}
	Y_{t}-Y_{0}=\sum_{i=1}^{n}X_{ni}, \label{telescope}
	\end{equation}
	where the dependence of $X_{ni}$ on $t$ is suppressed for notational convenience. Since
	\[
	Y_{t}-Y_{0}-\mathbb{E}\left[Y_{t}-Y_{0}\right]=\sum_{i=1}^{n}\left[Y_{ti/n}-Y_{t(i-1)/n}-\left(\mathbb{E}\left[Y_{ti/n}\right]-\mathbb{E}\left[Y_{t(i-1)/n}\right]\right)\right],
	\]
	without loss of generality, we can assume that $Y_{t}-Y_{0}$ and
	$X_{ni}=Y_{ti/n}-Y_{t(i-1)/n}$  have mean zero.
By stationary increments in Assumption 2,
	$X_{ni}$ has the same distribution as $Y_{t/n}-Y_{0}$. Consequently, by Equation (\ref{varform}), 
	\begin{equation} \label{expxni2}
	\mathbb{E}\left[X_{ni}^2\right]=\sigma^{2}\frac{t}{n},
	\end{equation}
implying $\sum_{i=1}^{n}\mathbb{E}\left[X_{ni}^2\right] = \sigma^2 t.$ Thus we can apply the Lindeberg-Feller central limit theorem once the Lindeberg condition is satisfied.  
	
	Let $\epsilon>0$. By the consequence of the assumption of stationary increments noted above, 
	\begin{equation*}
	\sum_{i=1}^{n}\mathbb{E}\left[X_{ni}^{2};\left|X_{ni}\right|>\epsilon\right]  = n\mathbb{E}\left[\left(Y_{t/n}-Y_{0}\right)^{2};\left|Y_{t/n}-Y_{0}\right|>\epsilon\right],
	\end{equation*}
whence the Lindeberg condition follows from Assumption~\ref{A4}.

\end{proof} 

A couple remarks are in order.

\begin{rem}
While Assumptions \ref{A1} and \ref{A2} reflect reasonable properties of the asset price process, it is difficult to interpret Assumption \ref{A4}. It seems that the only significance of this assumption is its sufficiency for the Lindeberg condition. However, note that the array defined in Equation \eqref{xni}, by virtue of Equation \eqref{expxni2}, satisfies the second part of the first condition in the Lindeberg-Feller Theorem, rendering Assumption \ref{A4} necessary for the desired asymptotic normality.
\end{rem}

\begin{rem}
We note that Lindeberg's condition is needed, in principle, to avoid jumps in the stochastic process $Y_t$. Without this condition one
can obtain the Poisson process as a limit (or more generally a L\'evy
process), but this is outside of the scope of this article. See for example \cite[Theorem 28.5]{Billingsley1986}.

\end{rem}

\begin{acknowledgement}
The authors are grateful for many helpful and motivating conversations with M. Gordina, O. Mostovyi, H. Panzo, A. Sengupta and A. Teplyaev.

\end{acknowledgement}

	\bibliography{black_scholes_actual}
	\bibliographystyle{plain}
	\nocite{*}

\end{document}